\documentclass{article}
\usepackage[T1]{fontenc}                
\usepackage[utf8]{inputenc}             
\usepackage{amssymb,amsthm,amsmath}    
\linespread{1.24}                      
\usepackage{enumerate}
\usepackage{paralist}
\usepackage{authblk}
\usepackage{graphicx}
\usepackage{enumerate}
\usepackage{array}
\newcolumntype{C}[1]{>{\centering\let\newline\\\arraybackslash\hspace{0pt}}m{#1}}
\newcolumntype{P}[1]{>{\centering\arraybackslash}p{#1}}
\usepackage{tikz}
\usetikzlibrary{calc}
\usetikzlibrary{decorations.pathreplacing}

\newtheorem{theorem}{Theorem}
\newtheorem{lemma}{Lemma}
\newtheorem{prob}{Problem}
\newtheorem*{prob*}{Problem}

\newtheorem*{prop*}{Proposition}
\newtheorem*{example}{Example}

\newtheorem*{observ*}{Observation}

\newcommand{\Z}{\mathbb Z}

\newcommand{\M}{\mathcal M}

\newcommand{\ov}[1]{\overline{#1}}


\newcommand{\derive}[1]{\xrightarrow{#1}}

\newcommand{\om}{\omega}
\newcommand{\Si}{\mathsf\Sigma}

\newcommand{\Ga}{\mathsf\Gamma}

\newcommand{\fa}{\forall}

\title{On Bi-infinite and Conjugate Post Correspondence Problems}
\author{Olivier Finkel} 
\affil[1]{Institut de Math\'ematiques de Jussieu - Paris Rive Gauche\\
CNRS, Universit\'e Paris Cit\'e,  Sorbonne Universit\'e, Paris, France. \textit{finkel@math.univ-paris-diderot.fr}}
\author{Vesa Halava\thanks{Supported by emmy.network foundation under the aegis of the Fondation de Luxembourg.}}
\author{Tero Harju} 
\author{Esa Sahla}
\affil[2]{Department of Mathematics and Statistics, University of Turku, Finland. \textit{\{vesa.halava,harju,esa.sahla\}@utu.fi}}

\newcommand{\tikzmark}[1]{\tikz[overlay,remember picture] \node (#1) {};}

\newcommand*{\BraceAmplitude}{0.5em}
\newcommand*{\VerticalOffset}{1ex}
\newcommand*{\InsertUnderBrace}[4][]{%
    \begin{tikzpicture}[overlay,remember picture]
\draw [decoration={brace,amplitude=\BraceAmplitude},decorate, thick,draw=blue,text=black,#1]
        ($(#3)+(0,-\VerticalOffset)$) -- 
        ($(#2)+(0,-\VerticalOffset)$)
        node [below=\VerticalOffset, midway] {#4};
    \end{tikzpicture}%
}%
\begin{document}

\maketitle

\begin{abstract}

We study two modifications of the Post Correspondence Problem (PCP), namely 1) the bi-infinite version, where  it is asked whether there exists a bi-infinite word such that two given morphisms agree on it, and     2) the conjugate version, where we require the images of a solution for two given morphisms are conjugates of each other. 
For the bi-infinite PCP we show that it is in the class  $\Sigma_2^0$ of the arithmetical hierarchy and for the 
conjugate PCP we give an undecidability proof by reducing it to the word problem for a special type of semi-Thue systems.

\vspace{1em}

\noindent
\textit{Keywords:} Bi-infinite words, Conjugate words, Post Correspondence Problem, Undecidability
\end{abstract}

\section{Introduction}
The original formulation of the  \emph{Post Correspondence Problem} (PCP) by Emil Post in \cite{Po-46} is the following:

\begin{prob}[PCP]
Let $A$ be a finite alphabet. Given a finite set  of pairs of words over $A$, say $(u_1,v_1),(u_2,v_2),  \dots,$ $(u_n,v_n)$,  does there exists a nonempty sequence $i_1,\dots,i_k$ of indices such that 
\[
u_{i_1}u_{i_2}\cdots u_{i_k} =v_{i_1}v_{i_2}\cdots v_{i_k}\, ?
\]
\end{prob}

Post proved that the PCP is undecidable in \cite{Po-46}. Since then  
the PCP and its many variants have been used as a bridge from combinatorial undecidable problems of computational systems and formal rewriting systems to decision problems in algebraic settings. 
The PCP is usually defined as a problem in free word monoids (as $A^*$ is the free monoid of all  finite words over $A$ with catenation as the operation). Indeed, the PCP is equivalent to asking for two given morphisms $g,h\colon B^*\to A^*$, whether or not there exists a non-empty word $w$ such that 
$$
g(w)=h(w).
$$
Note that we may choose  set $B=\{1,\dots, n\}$, and $g(i)=u_i$, $h(i)=v_i$ for all $i=1,\dots, n$ where $(u_i,v_i)$ is a pair  in the original formulation of the PCP. 
The variants of the  PCP also reveal the boundary between decidability and undecidability. It is known that the PCP is decidable for $n=2$, see \cite{EKR82,HHH02}, and undecidable for $n=5$, see \cite{Neary}. On the other hand, it is known that the infinite PCP, asking whether there is a (right) infinite sequence $i_1,i_2,\dots$ of the indices such that the words agree, is decidable for two pairs of words, see \cite{HHK}, and undecidable for 8 pairs, see \cite{dongliu}. It has  been proved that the infinite PCP is not "more complex" than the PCP with respect to the arithmetical hierarchy, see \cite{Finkel}, where it was proved that the infinite PCP is $\Pi_1^0$-complete  as the PCP is known to be $\Sigma_1^0$-complete.   
 
In this paper we study two variants of the PCP. The first variant is called the 
\emph{bi-infinite Post Correspondence Problem} ($\Z$PCP), where it is asked whether or not there exits a bi-infinite sequence of the indices such that the words agree.  The morphisms version of the  $\Z$PCP is the following: 

\begin{prob}\label{prob:bi-inf1}
Given two morphisms $h,g \colon A^*\to B^*$, does the exist a  bi-infinite word $w$ such that $h(w) = g(w)$. 
\end{prob}

Note that already the equality of the images of bi-infinite words needs to be defined properly: for a bi-infinite word $w$, $h(w) = g(w)$ if and only if there is a constant $s\in \Z$ such that for all letters $h(w)(i)=g(w)(i+s)$ for all positions $i \in \Z$. 
An \emph{instance} of the $\Z$ PCP is a pair of morphisms $(h,g)$ and a bi-infinite word $w$ satisfying  $h(w)=g(w)$ is said to be a \emph{solution} of the instance $(h,g)$. 

Our second variant deals with conjugate words.
Two words $x$ and $y$ are \emph{conjugates} if there exist words $u$ and $v$ such that  $x=uv$ and $y=vu$. 

We call the following problem the \emph{conjugate-PCP}. 

\begin{prob}
Given two morphisms $h,g: A^* \rightarrow B^*$, does there exist a word $w \in A^+$ such that $h(w)=uv$ and $g(w)=vu$ for some words $u,v \in B^*$.
\end{prob}

The behaviour of the instances of the conjugate-PCP differ vastly from the more traditional variants where a valid presolution (prefix of a candidate solution) can be verified by aligning the
matching parts of the images. Working with the possible solutions of the
instances of the conjugate-PCP is less intuitive.

For example let us have morphisms $h,g$ and we guess that a solution $w$ begins with the letter $a$. Then the situation is the following:

\begin{alignat*}{2}
 h(w) \ = \ &\tikzmark{Start1} h(a) \qquad \cdots \qquad \tikzmark{End1} \tikzmark{Start2} g(a) \qquad \qquad \cdots \qquad \qquad \tikzmark{End2} \\
 \\
 g(w) \ = \ & \tikzmark{Start3} g(a) \qquad \qquad \cdots \qquad \qquad  \tikzmark{End3}\tikzmark{Start4} h(a) \qquad \cdots \qquad\tikzmark{End4} \\
\end{alignat*}
\InsertUnderBrace[draw=black,text=black]{Start1}{End1}{$u$}
\InsertUnderBrace[draw=black,text=black]{Start2}{End2}{$v$}
\InsertUnderBrace[draw=black,text=black]{Start3}{End3}{$v$}
\InsertUnderBrace[draw=black,text=black]{Start4}{End4}{$u$}

The validity of the presolution $a$ cannot be verified because  there may  not be any matching between $h(a)$ and $g(a)$. Moreover the factorization of the images to $u$ and $v$ need not be unique even for minimal solutions:

\begin{example}
\normalfont
Let $h,g: \lbrace a,b \rbrace^* \rightarrow \lbrace a,b \rbrace^*$ be morphisms defined by
\begin{alignat*}{4}
h(a) &= aba, \ \ &&g(a) &&= bab, \\
h(b) &= b, &&g(b) &&= a.
\end{alignat*}
Now $ab$ is a minimal solution for the conjugate-PCP instance $(h,g)$ having two factorizations: $u=a, v=bab$ or $u= aba, v=b$.
\end{example}

Both variants  defined above were originally proved to be undecidable in \cite{Ruo} using linearly bounded automata (LBA)\footnote{Note that in \cite{Ruo} 
the $\Z$PCP is called \emph{doubly infinite PCP}}. Indeed, the conjugate-PCP was not directly proved in \cite{Ruo}, although it is claimed so in \cite{Ruo2}. Let us consider the terminology of Ruohonen in  \cite{Ruo2} in a bit more details: Let $u$ and $v$ be words, and denote $u\sim_m v$ if there exist words $u_1,\dots, u_m$ and a permutation $\tau$ on the set $\{1,\dots, m\}$   such that $u=u_1\cdots u_m$ and $v=u_{\tau(1)}\cdots u_{\tau(m)}$. 

\begin{prob}[$(m,n)$-permutation PCP] 
Given morphisms $g,h\colon A^*\to B^*$ does there exists words $u,v\in A^*$ such that
$$
u\sim_m v \text{ and } g(u)\sim_n h(v).
$$
\end{prob}

Obviously, our formulation of the conjugate-PCP is the (1,2)-permutational PCP of Ruohonen. Now in 
\cite{Ruo2}, it is mentioned that (1,2)-permutational PCP was proved to be undecidable in \cite{Ruo}, but  the problem is not explicitly mentioned there. On the other hand, the (2,2)-permutational PCP is shown to be undecidable in \cite{Ruo}\footnote{Note that $(2,2)$-permutational PCP is called the \emph{PCP for circular words} in \cite{Ruo}}. It is possible that Ruohonen uses the later result without details, because of the following simple lemma, which follows from special cyclic shift property of permutations.

\begin{lemma}\label{simplest}
For morphisms $g,h : \colon A^*\to B^*$, the instance $(g,h)$ has a solution to the  $(1,2)$-permutational PCP if and only if it has a solution to the $(2,2)$-permutational PCP.
\end{lemma}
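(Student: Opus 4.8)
The plan is to prove both implications directly from the definitions, with the forward direction being immediate and the reverse direction using the cyclic-shift structure of permutations on two elements. First, suppose $(g,h)$ has a solution to the $(1,2)$-permutational PCP. Then there are words $u,v\in A^*$ with $u\sim_1 v$ (so $u=v$) and $g(u)\sim_2 h(v)$. Since $u\sim_1 v$ trivially implies $u\sim_2 v$ (take the factorization $u=u\cdot\varepsilon$ and the identity permutation, or any two-block factorization with a suitable permutation), the same pair $(u,v)$ witnesses a solution to the $(2,2)$-permutational PCP. So one direction costs nothing.

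For the converse, suppose $(g,h)$ has a solution to the $(2,2)$-permutational PCP: there are $u,v\in A^*$ with $u\sim_2 v$ and $g(u)\sim_2 h(v)$. Write $u=u_1u_2$ and $v=u_{\sigma(1)}u_{\sigma(2)}$ for a permutation $\sigma$ on $\{1,2\}$. If $\sigma$ is the identity, then $u=v$ and we already have a $(1,2)$-solution with the same pair. The interesting case is $\sigma=(1\,2)$, so $v=u_2u_1$, i.e. $v$ is a cyclic shift of $u$. The key observation is that one can then pass to a new pair that collapses the outer permutation: I would set $w:=u_1u_2$ and consider the images $g(u_1u_2)$ and $h(u_2u_1)$. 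Because $h$ is a morphism, $h(u_2u_1)$ is a conjugate of $h(u_1u_2)$ — specifically $h(u_1u_2)=h(u_1)h(u_2)$ and $h(u_2u_1)=h(u_2)h(u_1)$ — so $h(u_2u_1)\sim_2 h(u_1u_2)$ via the swap. Combining this with the hypothesis $g(u)\sim_2 h(v)$ through transitivity-like composition of the $\sim_2$ relation (here using that $\sim_2$ composed with $\sim_2$ on the same pair of blocks still lands in $\sim_2$, which holds precisely because there are only two blocks and $S_2$ is closed), one gets $g(u_1u_2)\sim_2 h(u_1u_2)$. Thus the pair $(w,w)=(u_1u_2,u_1u_2)$ satisfies $w\sim_1 w$ and $g(w)\sim_2 h(w)$, which is a $(1,2)$-permutational PCP solution.

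The step I expect to be the main obstacle is making the "composition" argument in the converse fully rigorous: showing that from $g(u)\sim_2 h(v)$ and $h(v)\sim_2 h(w)$ one may conclude $g(w')\sim_2 h(w')$ for a single word $w'$ with $w'=w'$, rather than just $g(u)\sim_2 h(w)$ with $u$ and $w$ distinct. One must track carefully how the two outer blocks $u_1,u_2$ and the two inner blocks of the $g$- and $h$-images interact, and verify that re-associating the word so that the outer permutation becomes trivial does not force the inner permutation out of $S_2$. The cleanest way is probably to note that $g(u_1u_2)$ and $h(u_1u_2)$ are both conjugates (as whole words) of $g(u_1u_2)$ and $h(u_2u_1)=h(v)$ respectively, and that conjugacy of full words is exactly the $m=1$, $n=2$ condition on the already-equal pair $(w,w)$; so the lemma reduces to the transitivity of word conjugacy together with the hypothesis. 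Once that is spelled out, both directions are short.
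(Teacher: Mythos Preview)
Your proposal is correct, and the hedging in your final paragraph is unnecessary: the ``composition'' step is simply the transitivity of word conjugacy. Since $\sim_2$ is exactly the relation ``is a conjugate of'', which is an equivalence relation on words, from $g(u_1u_2)\sim_2 h(u_2u_1)$ (the hypothesis) and $h(u_2u_1)\sim_2 h(u_1u_2)$ (because $h(u_2)h(u_1)$ is a cyclic shift of $h(u_1)h(u_2)$) you get $g(u_1u_2)\sim_2 h(u_1u_2)$ directly. There is no interaction between inner and outer block structures to track, because conjugacy is a relation on whole words independent of any particular two-block factorization, so the same $w=u_1u_2$ works on both sides.

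The paper arrives at the same conclusion by a more hands-on route: it writes $g(xy)=zw$ and $h(yx)=wz$ explicitly and then splits into two cases according to whether $h(y)$ is a prefix of $w$ or vice versa, in each case computing $h(xy)$ and exhibiting the cyclic shift that makes it a conjugate of $g(xy)$. In effect the paper re-proves, in this specific instance, the transitivity step you invoke abstractly. Your approach is shorter and more conceptual; the paper's is self-contained and makes the explicit conjugating factor visible, at the cost of a case split. Either argument is fine, but you should drop the tentative language and simply state that conjugacy is transitive.
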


\begin{proof}
Firstly, a solution to the $(1,2)$-permutational PCP is a solution to the $(2,2)$-permutational PCP where the first permutation on the pre-image being trivial. 

Secondly, assume that there exists a solution $u=xy,v=yx$, two $(2,2)$-permutational PCP. So  $g(xy)=zw$ and $h(yx)=wz$ for some  $w,z\in B^*$. We have two cases, either $h(y)$ is a prefix of $w$, or vice versa. 

In the first case, $w=h(y)r$ for some word $r\in B^*$, and $h(xy)=rzh(y)$ and $g(xy)=zw=zh(y)r$ and, therefore, $xy$ is a solution for the $(1,2)$-permutational PCP. 

In the second case, $h(y)=wr$, $z=rh(x)$, for some word $r\in B^*$. Then  $g(xy)=rh(x)w$ and  $h(xy)=h(x)wr$,  implying that $xy$ is again a solution for the $(1,2)$-permutational PCP. 
\end{proof}

We stress that the undecidability of the $(m,n)$-permutational PCP was proved in \cite{Ruo2}, using the machinery of LBA's used already in \cite{Ruo}, for all $m$ and $n$. This extends the result is \cite{Ruo} where it was shown that the $(n,1)$-permutational PCP is undecidable for all $n$\footnote{Note that in \cite{Ruo} $(n.1)$-permutational PCP is called $n$-permutational PCP.}. 

The proof in \cite{Ruo} and \cite{Ruo2} are rather involved because of the employment of the computations of LBA's, and there is a quest for simpler treatments of the problems. There is a line of new simplified proofs for Ruohonen's results on the permutational PCP's, which  use of a word problem of the special type of semi-Thue systems:
For the case of $(2,1)$-permutational PCP  a somewhat simpler proof was given in  \cite{HaHa-13}, where the problem was called  the \emph{circular PCP}.  
In \cite{Ernvall2015} a simplified proof for $(n,1)$-permutational PCP (or $n$-permutational PCP) was given for all positive $n$. For $\Z$PCP a simpler proof was given in \cite{bi-inf}. 

In the next section we give a new proof to  undecidability of the conjugate-PCP by reducing it to the word problem for a special type of semi-Thue systems. These are is indeed the same special semi-Thue systems that were used for provingundecidability of the $\Z$PCP in \cite{bi-inf}, but the construction here is different due to differences in the $\Z$PCP and the conjugate-PCP.
By Lemma~\ref{simplest}, this also proves undecidability of the $(2,2)$-permutational PCP. 

In the final section we show that the $\Z$PCP is in  the class $\Sigma_2^0$ of the arithmetical hierarchy. This reflects to the result in \cite{Finkel}, where it was proved that the infinite PCP is $\Pi_1^0$-complete  as the PCP is known to be $\Sigma_1^0$-complete.   

At first sight, it may seem that the $\Z$PCP and the conjugate PCP do not have anything in common, but that is not the case. In both of these problems, solutions have a shift, in the $\Z$PCP the shift makes the images, two bi-infinite words, equal and in the conjugate PCP the images are equal over a cyclic shift of the word. Therefore, the construction in the next section for the conjugate PCP has similar ideas as the construction for the $\Z$PCP in \cite{bi-inf}.    

\section{The proof of undecidability of the conjugate-PCP}

We shall shortly recall the construction of the semi-Thue system $T_\M$ in \cite{bi-inf}. First of all,  
a \emph{semi-Thue system}
$T$ is a pair  $(\Gamma,R)$ where  $\Gamma= \{ a_1, a_2, \dots , a_n\}$ is a finite alphabet,
the elements of which are called \emph{generators} of~$T$, and the relation $R \subseteq \Gamma^*
\times \Gamma$ is  the set of
\emph{rules} of $T$. We write $u \derive{}_{T} v$, if there exists a rule $(x,y) \in R$ such that $u=u_1xu_2$ and
$v=u_1yu_2$ for some words $u_1,u_2\in \Gamma^*$. We denote by $\derive{}^*_{T}$ the
reflexive and transitive closure of $\derive{}_{T} $, and by $\derive{}^+_{T}$ the transitive
closure of $\derive{}_{T} $. Note that the index $T$ is omitted from the notation, when the semi-Thue system studied is clear from the
context. If $u\derive{}^* v$ in $T$, we say that there is a derivation from  
$u$ to $v$ in $T$. 

In the \emph{word problem} for semi-Thue systems, is it asked, for a given semi-Thue system $T$ and words $w$ and $u$, whether $w\derive{}_T^* u$. In a \emph{circular word problem} on the other hand, it is asked whether there exists a word $u$ for a given semi-Thue system $T$ such that $u\derive{}_T^+ u$.    

In \cite{bi-inf}, a special kind of semi-Thue system was constructed which harnesses the structure of a given deterministic Turing machine. Assume that a \emph{Turing 
machine}, TM for short,  $\mathcal{M}$ is of the form $\mathcal{M}=(Q, \Si, \Ga, \delta, q_0, F)$, where $Q$ is a finite set of states, $\Si$ is a finite input alphabet, $\Ga$ is a finite tape alphabet satisfying $\Si  \subseteq \Ga$, containing a special blank symbol $\Box \in \Ga \setminus \Si$, 
$q_0$ is a unique \emph{initial state},  $\delta$ is a \emph{transition} mapping from $Q \times \Ga$ to subsets of $Q \times \Ga \times \{L, R, S\}$, and $F\subseteq Q$ is the set of \emph{accepting states}. We assume that in a TM, the transition mapping is a partial function as the TM's are assumed to be deterministic. 

For purposes of this section we also assume that $F=\{H\}$, that is, there exists a unique accepting state $H\in Q$, called  the \emph{halting} state. It can be assumed that a computation of a TM halts (i.e., no more transitions are applicable) if and only it arrives to state $H$. We denote a configuration of $\M$ by a word $uqv\Box$, if the contents of the non-blank part of the tape is $uv$, and $\M$ is reading the first symbol of $v\Box$ in state $q\in Q$.
 
A semi-Thue system $S_{\M}=(\Lambda,R_S)$ imitating the computation of a fixed deterministic TM $\M$ is constructed using the following ideas originally given in \cite{HuLa-78}: $\Lambda=Q\cup\Si\cup\Ga\cup \{L,R\}$, where $L$ and $R$ are end markers. Indeed, the initial configuration   $q_0w$ of $\M$ corresponds to a word $Lq_0wR\in \Lambda^+$  
and the rules of $R_S$ are implied by the transition function $\delta$ so that, for example,   
$$
(aqb, acp)\in R_S \text{ if } \delta(q,b)=(p,c,R),   
$$
and similarly for the other types of transitions. Now it is straightforward to see that a TM $\M$ halts on input $w$ in the configuration $uHv$ for some words $u$ and $v$ if and only if $Lq_0wR\derive{}_{S_{\M}}^* LuHvR$. 
Since the halting problem of TM's on empty tape is undecidable, we may assume in the above that $w=\Box$. 

We obtain a simple proof for undecidability of the the word problem, see \cite{HuLa-78}, by adding  letter-by-letter cancellation rules such that $LuHvR\derive{}^* LHR$ to the semi-Thue system $S_{\M}$. Furthermore, by adding a  special rule 
\begin{equation}\label{eka:sperule}
(LHR,Lq_0wR)
\end{equation}
we have a semi-Thue system with undecidable circular word problem. The semi-Thue system constructed is $Q$-\emph{deterministic} meaning that in all  rules $(u,v)$, both $u$ and $v$ contain exactly one symbol from set $Q$.  

In order to prove that the conjugate-PCP is undecidable, we need to modify the above construction a bit. First of all, we take another copy of the  alphabet $\Lambda$, say $\ov{\Lambda}=\{\ov{a} \mid a\in \Lambda\}$ and add also overlined copies of all rules except the rule~\eqref{eka:sperule} to the system. The special rule~\eqref{eka:sperule} is replaced by two new rules, 
\begin{equation}\label{toka:sperule}
(LHR,\ov{Lq_0\Box R}) \text{ and }(\ov{LHR},Lq_0\Box R).
\end{equation}
Now the circular derivation 
$$
Lq_0\Box R\derive{}_{S_{\M}}^* LuHvR \derive{}^* LHR\derive{} Lq_0\Box R
$$
is transformed into circular derivation 
$$
Lq_0\Box R\derive{}^* LuHvR \derive{}^* LHR\derive{} \ov{Lq_0\Box R}\derive{}^*
\ov{LuHvR} \derive{}^* \ov{LHR}\derive{} Lq_0\Box R
$$
in our new system.

Finally, we simplify the alphabet $\Lambda$ (and $\ov{\Lambda})$. Indeed, we encode injectively the letters in  $\Lambda \setminus ({Q} \cup \{L,R\})$ into $\{a,b\}^+$, and denote the new alphabets $A=\{a,b,L,R\}$ and $B=Q$. We have now constructed a  
semi-Thue system $T_{\M} = (\Sigma, \mathcal{R})$ with the following properties:

\begin{enumerate}

\item $\Sigma = A \cup \ov{A} \cup B \cup \ov{B}$ with pairwise disjoint alphabets $A, \ov{A}, B, \ov{B}$. Notably $A= \lbrace a,b,L,R \rbrace$ where $L,R$ are markers for the left and right border of the word, respectively.

\item $T_{\M}$ is $(B \cup \ov{B})$-deterministic in the following way:

\begin{enumerate}[(i)]

\item $\mathcal{R} \subseteq (A^*BA^* \times A^*BA^*) \cup (\ov{A^*BA^*} \times \ov{A^*BA^*}) \cup (A^*BA^* \times \ov{A^*BA^*}) \cup (\ov{A^*BA^*} \times A^*BA^*)$.

\item If $t_i$ is a rule in $\mathcal{R}$ where none of the symbols are overlined, then the corresponding overlined rule $\ov{t_i}$, where all symbols are overlined is also in $\mathcal{R}$, and vice versa.

\item For all words $w \in (A \cup \ov{A})^*(B \cup \ov{B})(A \cup \ov{A})^*$, if there is a rule in $\mathcal{R}$ giving $w \derive{}_T w'$ then the rule is unique.

\item There is a single rule from $A^*BA^* \times \ov{A^*BA^*}$ and a single rule from $\ov{A^*BA^*} \times A^*BA^*$, moreover these rules are such that they re-write everything between the markers $L$ and $R$, namely if there are rules giving $u \derive{}_{T_{\M}} \overline{w_0}$ and $\ov{u} \derive{}_{T_{\M}} w_0$ for a $u \in A^* B A^* $ then the rules are $(u,\ov{w_0})$ and $(\ov{u},w_0)$, respectively. These rules are the rules in~\eqref{toka:sperule} coded into $\Sigma$.

\end{enumerate}

\item\label{circular} $T_{\M}$ has an undecidable circular word problem. In particular it is undecidable whether $T$ has a circular derivation $w_0 \derive{}_{T_{\M}} ^* w_0$ where $w_0 \in A^*BA^*$ is the word appearing in the rules of 2(iv). Note that $w_0$ and  $u$ in the case 2(iv) are fixed words from the construction of the semi-Thue system $T_{\M}$ for a particular Turing machine $M$, and $w_0 \neq u$.

\end{enumerate}

The special $(B \cup \ov{B})$-determinism of $T_{\M}$ can be interpreted as derivations being in two different phases: the normal phase and the overlined phase. Transitioning between phases happens via the unique rules from 2(iv). It is straightforward to see that all derivations do not go through phase changes and that the phase is changed more than once if and only if $T$ has a circular derivation. The system considered is now fixed from the context and we write the derivations omitting the index $T$ simply as $\derive{}$.

We now add a few additional rules to $T_{\M}$: we remove the unique rule $(u,\ov{w_0})$ and replace it with one extra step by introducing rules $(u,s)$ and $(s,\ov{w_0})$ where $s$ is a new symbol for the intermediate step. The corresponding overlined rules $(\ov{u},\ov{s})$ and $(\ov{s},w_0)$ are added also to replace the rule $(\ov{u},w_0)$. These new rules are needed in identifying the border between words $u$ and $v$, and adding them has no effect on the behaviour of $T_{\M}$.

By the case \ref{circular} of the properties of $T_{\M}$ we have the following lemma.

\begin{lemma}
\label{wordproblem}
Assume that the semi-Thue system $T_{\M}$ is  constructed as in the the above. Then $T_{\M}$ has an undecidable individual circular word problem for the word $w_0$.
\end{lemma}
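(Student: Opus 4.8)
\emph{Plan.} Write $T_0$ for the system $T_{\M}=(\Sigma,\mathcal R)$ as it stands before the last modification (the one satisfying properties 1--3 above), so that $T_{\M}$ in the statement is $T_0$ with the two phase-changing rules split through the fresh letters $s,\ov{s}$. I would prove the lemma in two steps: first, that $T_0$ has an undecidable individual circular word problem for $w_0$; second, that the split does not change which circular derivations start and end at $w_0$, i.e.\ $w_0 \derive{}^{+}_{T_0} w_0$ iff $w_0 \derive{}^{+}_{T_{\M}} w_0$.

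For the first step, the key is the simulation recalled above in the style of~\cite{HuLa-78}: $S_{\M}$ is built from the transition function of the deterministic machine $\M$ so that $\M$ halts on the empty tape iff $Lq_0\Box R \derive{}^{*} LHR$, the letter-by-letter cancellation rules give $LuHvR \derive{}^{*} LHR$, and the rules~\eqref{toka:sperule} (coded into $\Sigma$) close the loop through the overlined copy of the alphabet. Hence $\M$ halts on empty input iff $w_0 \derive{}^{+}_{T_0} w_0$; since the halting problem on empty input is undecidable, so is the individual circular word problem ``$w_0 \derive{}^{+}_{T_0} w_0$?''. This is exactly property~\ref{circular}.

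For the second step, recall that $T_{\M}$ is obtained from $T_0$ by replacing $(u,\ov{w_0})$ with $(u,s),(s,\ov{w_0})$ and $(\ov{u},w_0)$ with $(\ov{u},\ov{s}),(\ov{s},w_0)$. On the one hand, replacing every application of $(u,\ov{w_0})$ by the two consecutive steps through $s$, and symmetrically for the overlined rule, turns a circular derivation $w_0 \derive{}^{+}_{T_0} w_0$ into one of $T_{\M}$. On the other hand, $s$ (resp.\ $\ov{s}$) occurs only in the two rules that create and consume it, does not occur in $w_0$, and can be removed only by $(s,\ov{w_0})$ (resp.\ $(\ov{s},w_0)$); in a derivation from $w_0$ to $w_0$ the rule $(u,s)$ can only be applied to the string $u$ itself, producing the one-letter string $s$, whose unique continuation is $(s,\ov{w_0})$. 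So in any circular derivation of $T_{\M}$ through $w_0$, every use of $(u,s)$ is immediately followed by $(s,\ov{w_0})$, and likewise for the overlined pair; collapsing these pairs back to single steps recovers a circular derivation of $T_0$. Hence $w_0 \derive{}^{+}_{T_0} w_0$ iff $w_0 \derive{}^{+}_{T_{\M}} w_0$, and by the first step the latter is undecidable.

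\emph{Main obstacle.} The one point that needs care is the converse direction of the second step: one must argue that an application of $(u,s)$ can be undone only by $(s,\ov{w_0})$ --- i.e.\ that no simulation or cancellation rule can touch, delete, or duplicate the letter $s$ --- so that the derivation really decomposes into blocks, each a legal step of $T_0$. This follows from $s$ being a brand-new symbol occurring in exactly those two rules, together with the fact that $(u,s)$ rewrites the whole configuration word $u$ to the single letter $s$; the $(B\cup\ov{B})$-determinism of $T_0$ keeps the marker structure intact, so that $(u,s)$ indeed only ever fires on the bare string $u$.
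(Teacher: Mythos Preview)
Your proposal is correct and follows the paper's own reasoning: the paper does not give a separate proof for Lemma~\ref{wordproblem} at all, but simply asserts that it follows from property~\ref{circular} together with the remark that the $s$-splitting ``has no effect on the behaviour of $T_{\M}$.'' Your two-step argument unpacks exactly these two assertions, and your careful justification that $(u,s)$ can only fire on the bare word $u$ (so the split steps must occur back-to-back) spells out a point the paper leaves implicit.
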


We now reduce the individual circular word problem of the system $T_{\M}$ to the conjugate-PCP.

Let $\mathcal{R} = \lbrace t_0, t_1, \ldots, t_{h-1}, t_h \rbrace$, where the rules are $t_i = (u_i,v_i)$. We denote by $l_x$ and $r_x$ the left and right desynchronizing morphisms defined by 
$$
l_x(a)=xa, \qquad r_x(a) = ax
$$
for all words $x$.
In the following we consider the elements of $\mathcal{R}$ as letters. Denote by $A_j$ the alphabet $A$ where letters are given subscripts $j=1$ and $2$, respectively. Define the morphisms $h,g : (A_1 \cup A_2 \cup \ov{A_1} \cup \ov{A_2} \cup \lbrace \#,\overline{\#}, I \rbrace \cup \mathcal{R})^* \rightarrow \lbrace a,b,d,e,f,\#,\$,\pounds \rbrace^*$ according to the following table:

\begin{center}
\begin{tabular}{m{2em} | C{10em} | C{10em}  C{10em} }
\setlength\extrarowheight{2.5pt}
& $h$ & $g$ \\
\cline{1-3}
$I$ & $\$ l_{d^2}(w_0 \#)d$ & $\pounds ee,$\\ 
$x_1$ & $dxd$ & $xee,$ & $x \in \lbrace a,b \rbrace$\\ 
$x_2$ & $ddx$ & $xee,$ & $x \in \lbrace a,b \rbrace$\\ 
$t_i$ & $d^{-1}l_{d^2}(v_i)$ & $r_{e^2}(u_i),$ & $t_i \not\in \lbrace t_{h-1},t_h \rbrace $ \\ 
$t_{h-1}$ & $ds f f $ & $r_{e^2}(u \#) $ \\ 
$t_h$ & $f \$ \pounds l_{e^2}(w_0 \#) e e$ & $s fff \pounds \$ d d$ \\ 
$\#$ & $dd\#d$ & $\#ee$ \\ 
$\overline{x_1}$ & $xee$ & $xdd,$ & $\ov{x} \in \lbrace \ov{a},\ov{b} \rbrace$ \\ 
$\overline{x_2}$ & $exe$ & $xdd,$ & $\ov{x} \in \lbrace \ov{a},\ov{b} \rbrace$\\ 
$\overline{t_i}$ & $e^{-2}l_{e^2}(v_i)e$ & $r_{d^2}(u_i),$ & $\ov{t_i} \not\in \lbrace \ov{t_{h-1}}, \ov{t_h} \rbrace$ \\ 
$\overline{t_{h-1}}$ & $sf $ & $r_{d^2}(u \#) $ \\
$\overline{t_h}$ & $f f \pounds $ & $s f f f \$ $ \\
$\overline{\#}$ & $e\# ee$ & $\# dd$ \\

\end{tabular}
\end{center}

\noindent
Here the re-writing rules are of the form $t_i = (u_i,v_i),$ for $u_i,v_i$. The following rules play important roles: \\
$t_{h-1}=(u,s),$ where $u$ is the unique word such that $(u,\ov{w_0}) \in R$, and \\
$t_{h} = (s, \ov{w_0})$.
\\

We begin by examining the forms of the images of $h$ and $g$. The morphisms are modified from the ones in \cite{bi-inf} with slight alterations made such that it is possible to have (finite) solutions to the instance of the conjugate-PCP with easily identifiable borders between the factors $u$ and $v$ using special symbols $\$$ and $\pounds$. The symbols $d,e$ and $f$ function as desynchronizing symbols. The desynchronizing symbols $d$ and $e$ make sure that in the solution $w$ the factors that will represent the configurations of the semi-Thue system $T_{\M}$ are of correct form, that is of the form where determinism is kept intact. This follows from the forms of $h$ and $g$: under $g$ all images are desynchronized by either $e^2$ (non-overlined letters) or $d^2$ (overlined letters). To have similarly desynchronized factors in the image under $h$ we note that in the pre-image the words between two $\#$-symbols (similarly for overlined symbols $\ov{\#}$) are of the form $\alpha t \beta$ where $\alpha \in \lbrace a_1,b_1 \rbrace$, $\beta \in \lbrace a_2,b_2 \rbrace$ and $t \in \mathcal{R}$ (with end markers $L$ and $R$ omitted from $\alpha$ and $\beta$). The symbol $f$ is not really used in desynchronizing but makes sure that the change between phases is carried out correctly. 

The following lemma is useful in our proof:

\begin{lemma}
\label{conjugates}
The words $h(w)$ and $g(w)$ are conjugates if and only if $h(w_1)$ and $g(w_2)$ are conjugates for all conjugates $w_1$ and $w_2$ of $w$.
\end{lemma}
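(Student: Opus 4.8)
The plan is to show both directions follow from the basic fact that conjugacy of words is detected up to cyclic rotation, combined with the observation that applying a morphism and taking a conjugate commute in the appropriate sense. First I would recall the elementary characterization: two words $x$ and $y$ are conjugates if and only if there is a word $z$ with $xz = zy$, equivalently $x$ and $y$ are cyclic rotations of a common word. In particular, if $x$ and $y$ are conjugates, then so are any cyclic rotation of $x$ and any cyclic rotation of $y$ (they all rotate the same primitive root the same number of times, up to the period), and conversely if some rotation of $x$ is a conjugate of some rotation of $y$ then $x$ and $y$ are conjugates, since conjugacy is transitive and every rotation of $x$ is a conjugate of $x$.

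Next I would translate this into the morphism setting. Write $w = w' w''$ so that a typical conjugate of $w$ is $w'' w'$. Then $h(w'' w') = h(w'') h(w')$ is a conjugate of $h(w') h(w'') = h(w)$, and likewise $g(w'' w')$ is a conjugate of $g(w)$. So for any conjugates $w_1$ of $w$ and $w_2$ of $w$, the word $h(w_1)$ is a conjugate of $h(w)$ and $g(w_2)$ is a conjugate of $g(w)$. Combining this with the transitivity remark from the previous paragraph: if $h(w)$ and $g(w)$ are conjugates, then $h(w_1)$ — being a conjugate of $h(w)$ — is a conjugate of $g(w)$, which in turn is a conjugate of $g(w_2)$, so $h(w_1)$ and $g(w_2)$ are conjugates. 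This gives the forward implication. The reverse implication is immediate by taking $w_1 = w_2 = w$ (each word is a conjugate of itself, via the trivial factorization), so the hypothesis on all conjugates specializes to the statement that $h(w)$ and $g(w)$ are conjugates.

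I expect the main (and really the only) subtlety to be bookkeeping the "up to shift" nature of conjugacy cleanly, i.e. making sure transitivity of the conjugacy relation is invoked correctly, since conjugacy of words is an equivalence relation on $B^*$ and the whole argument is just the statement that $h$ and $g$ send conjugate words into fixed conjugacy classes $[h(w)]$ and $[g(w)]$, which coincide precisely when $h(w) \sim g(w)$. No combinatorial input specific to the construction of $T_{\M}$ or the particular morphisms in the table is needed; the lemma is a general fact about morphisms and conjugacy, and the proof is a few lines once the equivalence-relation framing is in place.
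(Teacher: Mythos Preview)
Your proposal is correct and follows essentially the same route as the paper: both arguments observe that for any conjugate $w_1$ of $w$ one has $h(w_1)$ conjugate to $h(w)$ (via the factorization $w=w'w''$, $w_1=w''w'$ and the morphism property), then invoke transitivity of conjugacy for the forward direction and the trivial specialization $w_1=w_2=w$ for the reverse. The aside in your first paragraph about primitive roots and periods is unnecessary and slightly imprecise, but it plays no role in the actual argument; the proof in your second paragraph matches the paper's almost line for line.
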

\begin{proof}
If $h(w_1)$ and $g(w_2)$ are conjugates for all conjugates $w_1$ and $w_2$ of $w$ then of course $h(w)$ and $g(w)$ are conjugates.

Assume then that $h(w)$ and $g(w)$ are conjugates and let $w_1$ and $w_2$ be conjugates of $w$. There are then suffixes
$x$ and $y$ of $w$
such that $w_1 = x w x^{-1}$ and $w_2= y w y^{-1}$. Denote $w x ^{-1} = w'$ and $w y^{-1}=w''$. Now $h(w_1)=h(xw')=h(x)h(w')$ is a conjugate of $h(w')h(x)=h(w'x)=h(w)$ and $g(w_2)=g(yw'')=g(y)g(w'')$ is a conjugate of $g(w'')g(y)=g(w''y)=g(w)$. By our assumption also $h(w_1)$ and $g(w_2)$ are conjugates.
\end{proof}

Next we will show that a circular derivation beginning from a fixed word $w_0$ exists in $T_{\M}$ if and only if there is a solution to the conjugate-PCP instance $(h,g)$. We prove the claim in the following two lemmata.

\begin{lemma}
\label{main1}
If there is a circular derivation in $T_{\M}$ beginning from $w_0$, then there exists a non-empty word $w$ such that $h(w)\sim_2 g(w)$.
\end{lemma}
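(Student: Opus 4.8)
The plan is to take a circular derivation $w_0 = z_0 \derive{} z_1 \derive{} \cdots \derive{} z_m = w_0$ in $T_{\M}$ and explicitly build a word $w$ in the pre-image alphabet whose $h$- and $g$-images are conjugate. The natural candidate for $w$ encodes the sequence of configurations $z_0, z_1, \dots, z_m$ separated by $\#$ (and $\ov{\#}$ in the overlined phase), with each configuration written out letter-by-letter using the subscripted copies $A_1, A_2$ of $A$ together with the rule-letter $t_i$ applied at that step, in the pattern $\alpha\, t\, \beta$ described just before Lemma~\ref{conjugates}. The single marker $I$ is placed at the very beginning to set up the initial $\$$-delimited block and the $d^2$-desynchronization of $w_0\#$, and the special rules $t_{h-1}, t_h$ (and their overlined versions) are used precisely once, at the unique phase-change steps, to produce the $\$$ and $\pounds$ border symbols.

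First I would fix notation for the encoding map from a derivation to $w$, being careful that consecutive configurations $z_j$ and $z_{j+1}$ differ by exactly one rule application, so that the factor of $w$ coding the step $z_j \derive{} z_{j+1}$ really does have the form $\alpha t_i \beta$ with $\alpha \in \{a_1,b_1\}$, $\beta \in \{a_2,b_2\}$ (using the $(B\cup\ov B)$-determinism and the fact that rules rewrite a window $A^*BA^*$). Next I would compute $h(w)$ and $g(w)$ separately. The key bookkeeping is the telescoping of the desynchronizing morphisms: under $g$ every non-overlined block is uniformly right-desynchronized by $e^2$ and every overlined block by $d^2$, while under $h$ the left-desynchronization by $d^2$ (resp. $e^2$) of $l_{d^2}(v_i)$, combined with the leading $d^{-1}$ and the $dxd$ / $ddx$ images of the letter-blocks, is designed so that the $h$-image of the block coding $z_j \derive{} z_{j+1}$ reproduces $l_{d^2}(z_{j+1}\#)$ up to the shift already present from $z_j$. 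I expect that when the dust settles, $h(w)$ and $g(w)$ are two words that are cyclic shifts of one another, with the cut point located exactly at the $\$$ (or $\pounds$) symbols emitted by $I$ and by $t_h$; concretely one should be able to write $h(w) = PQ$ and $g(w) = QP$ for explicit $P, Q$, which is the definition of $h(w)\sim_2 g(w)$.

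The main obstacle, and where the real work lies, is verifying the \emph{matching across block boundaries}: one must check that the ``overhang'' left by one block (the trailing $d$ in $h(I)$, the $d^{-1}$ at the front of $h(t_i)$, the $ee$ vs $dd$ mismatches between a letter-block's $h$- and $g$-images, and the $f$/$s$ symbols at the phase change) is exactly absorbed by the next block, so that no spurious symbols survive and the two images line up as a single cyclic word. This is essentially the same desynchronization argument as in \cite{bi-inf}, but the conjugacy (rather than bi-infinite equality) forces the computation to ``close up'' into a cycle, so the contributions of $I$ at the start and of $t_h$ at the end must cancel against each other through the $\$$ and $\pounds$ markers; checking that these endpoint contributions are consistent — i.e. that the word $w$ can be chosen so that the cyclic closure is seamless — is the delicate step. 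Once the explicit factorization $h(w) = PQ$, $g(w) = QP$ is exhibited and each block boundary is checked, together with Lemma~\ref{conjugates} (which lets us ignore where exactly in the cyclic word we start reading $w$), the lemma follows.
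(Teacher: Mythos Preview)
Your plan is correct and follows essentially the same route as the paper: encode the circular derivation as $w = I\,w_1\#w_2\#\cdots\#t_{h-1}t_h\,\ov{w_1\#\cdots\#t_{h-1}t_h}$ with each $w_j=\alpha_j t_j \beta_j$, compute $h(w)$ and $g(w)$, and observe that the desynchronization telescopes so the two images differ only by swapping the $\$\cdots\$$ block with the $\pounds\cdots\pounds$ block. Two minor remarks: (i) Lemma~\ref{conjugates} is not needed in this direction, since you construct $w$ explicitly starting at $I$; the paper invokes it only in the converse lemma; (ii) your $\alpha\in\{a_1,b_1\}$, $\beta\in\{a_2,b_2\}$ should be $\alpha\in L\{a_1,b_1\}^*\cup\{\varepsilon\}$, $\beta\in\{a_2,b_2\}^*R\cup\{\varepsilon\}$.
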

\begin{proof}
Assume that a circular derivation exists. The derivation is then of the form $w_0 = \alpha_1 u_1 \beta_1 \rightarrow \alpha_1 v_1 \beta_1 = \alpha_2 u_2 \beta_2 \rightarrow \cdots \rightarrow u \rightarrow s \rightarrow \ov{w_0} = \ov{\alpha_1 u_1 \beta_1} \rightarrow \cdots \ov{u} \rightarrow \ov{s} \rightarrow w_0$, where $s$ and $u$ as defined earlier for $T_{\M}$. This derivation can be coded into a word

$$w=Iw_1 \# w_2 \# w_3 \# \cdots \# t_{h-1}t_h \ov{w_1 \# w_2 \# w_3 \# \cdots \# t_{h-1}t_h},$$
where $w_i=\alpha_i t_i \beta_i$ for each $i$, where  $t_i=(u_i,v_i)$ is the unique rewriting rule used in each derivation step. The rules $t_{h-1}$ and $t_h$ appear right before transition to overlined part of the  derivation as they correspond to the final and intermediate steps before the transition. Let us consider the images of $w$ under the morphisms $h$ and $g$ defined in the above:
$$
h(w) = \$ l_{d^2}(w_0 \# \alpha_1 v_1 \beta_1 \# \alpha_2 v_2 \beta_2 \# \cdots \# s)fff \$ \pounds l_{e^2}(w_0 \# \alpha_1 v_1 \beta_1 \cdots \#s)fff \pounds)
$$
and
$$
g(w) = r_{e^2}(\pounds \alpha_1 u_1 \beta_1 \# \alpha_2 u_2 \beta_2 \# \cdots \# u \#)sfff \pounds r_{d^2}(\$ \alpha_1 u_1 \beta_1 \cdots \# u \#) sfff \$).
$$
These images are indeed very similar. They match at all positions that do not contain a desynchronizing symbol ($d$ or $e$) or a special symbol ($\$$ or $\pounds$). Thus, if we erase all of these non-matching symbols we would have equality (and of the form $q^2$ for a word $q$). Also the non-matching symbols are such that $d$ is always matched with $e$ and $\$$ is always matched with $\pounds$. It is clear that the factors in both $h(w)$ and $g(w)$ beginning and ending with the same special symbol are the same, that is, the factor of the form $\$ \cdots \$ $ and the factor of the form $ \pounds \cdots \pounds$ both images are equal. It follows that $h(w)\sim_2 g(w)$, which proves our claim.
\end{proof}

\begin{lemma}
\label{main2}
If there exists a non-empty word $w$ such that $h(w)\sim_2 g(w)$, then there is a circular derivation in $T_{\M}$ beginning from $w_0$.
\end{lemma}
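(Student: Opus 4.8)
The plan is to reverse-engineer the argument of Lemma~\ref{main1}: starting from an arbitrary non-empty $w$ with $h(w)\sim_2 g(w)$, we want to recover a circular derivation $w_0\derive{}^*w_0$ in $T_\M$. The first step is normalization. By Lemma~\ref{conjugates} we may replace $w$ by any conjugate of it, so without loss of generality we may assume $w$ begins with the letter $I$ (we must first argue that $I$ actually occurs in $w$: the symbols $\$$ and $\pounds$ appear in images only via $I$, $t_h$, $\ov{t_h}$, and the counting/matching of $\$$'s and $\pounds$'s forces $I$ to occur, since $h(I)$ is the only image that starts a block with $\$$ while being desynchronized by $d^2$ on the left). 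Then I would use the $\$$ and $\pounds$ markers as the ``easily identifiable borders'' promised in the text: in $h(w)\sim_2 g(w)$, the occurrences of $\$$ and $\pounds$ partition each image into blocks, and the conjugacy relation $\sim_2$ must match a $\$\cdots\$$ block of one image with the corresponding block of the other, and likewise for $\pounds\cdots\pounds$. This pins down the factorization $w=w'w''$ witnessing $h(w)=xy$, $g(w)=yx$ and forces $w$ to have the shape $I\,(\text{non-overlined block})\,t_{h-1}t_h\,\ov{(\text{non-overlined block})}\,\ov{t_{h-1}}\,\ov{t_h}$ (up to the conjugacy freedom), i.e.\ exactly the coding shape from Lemma~\ref{main1}.

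The second and main step is to show that the non-overlined block is forced to be a syntactically correct encoding of a derivation. Here I would exploit the desynchronizing symbols $d,e$ exactly as the paragraph before Lemma~\ref{conjugates} suggests: under $g$ every non-overlined image is right-desynchronized by $e^2$ and every overlined image by $d^2$; matching this in $h(w)$ forces the pre-image, between consecutive $\#$'s, to be of the form $\alpha\,t\,\beta$ with $\alpha\in\{a_1,b_1\}^*$ (carrying the $L$-marked prefix), $t\in\mathcal R$, $\beta\in\{a_2,b_2\}^*$ — the subscripts $1$ and $2$ together with the two forms $h(x_1)=dxd$, $h(x_2)=ddx$ (resp.\ $g$-images $xee$) are what enforce that the letters read left of the rule symbol use the ``$_1$'' copies and those read right use the ``$_2$'' copies. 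Then, comparing within a $\$\cdots\$$ block, the equality forced by $\sim_2$ says precisely that successive blocks $\alpha_i v_i\beta_i$ and $\alpha_{i+1}u_{i+1}\beta_{i+1}$ agree after erasing the desynchronizers, i.e.\ $\alpha_i v_i\beta_i=\alpha_{i+1}u_{i+1}\beta_{i+1}$ as words over $\Sigma$; since $T_\M$ is $(B\cup\ov B)$-deterministic, this is exactly one rewriting step $\alpha_iu_i\beta_i\derive{}\alpha_iv_i\beta_i$ of $T_\M$. Chaining these, and using that the block immediately after $I$ reads $w_0$ (from $h(I)=\$l_{d^2}(w_0\#)d$) and the block before $t_{h-1}t_h$ reads $u$ (the unique word with $(u,\ov{w_0})\in R$, isolated by the special rule $t_{h-1}=(u,s)$, $t_h=(s,\ov{w_0})$ and the fresh symbol $s$), we get $w_0\derive{}^* u\derive{} s\derive{}\ov{w_0}$. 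The overlined block is handled symmetrically via the $\pounds\cdots\pounds$ matching and the overlined rules, giving $\ov{w_0}\derive{}^*\ov u\derive{}\ov s\derive{} w_0$, and concatenating yields the desired circular derivation $w_0\derive{}^*w_0$.

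The main obstacle I anticipate is the \emph{rigidity argument}: proving that an arbitrary solution $w$, with no a priori structure, must have the coding shape. Unlike the ordinary PCP, a presolution here cannot be validated incrementally (as the authors stress with their Example), so the argument cannot be a greedy left-to-right matching; instead it has to be global, driven entirely by (a) a counting/parity argument on the special symbols $\$,\pounds$ and the desynchronizers $d,e$ — e.g.\ each occurrence of $\$$ in $h(w)$ must be matched with one in $g(w)$, and only $h(I),h(t_h),g(t_h)$ produce $\$$, so their counts balance and force the alternation pattern — and (b) the local determinism of $T_\M$ to eliminate any ``illegal'' block once the gross shape is known. I would organize this as a sequence of claims: first that $\#$ and $\ov\#$ occurrences in the two images are matched in order (so the block decomposition is canonical), then that each matched pair of blocks differs by exactly one rule application, then that the phase change $t_{h-1}t_h$ / $\ov{t_{h-1}}\,\ov{t_h}$ occurs exactly once in each half, using property 2(iv) of $T_\M$ (uniqueness of the cross-phase rule) together with the fact that $f,s,\$,\pounds$ appear only in the images of $I,t_{h-1},t_h$ and their overlined counterparts. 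The bookkeeping with the $d^{-1}$/$e^{-2}$ prefixes in $h(t_i)$, $h(\ov{t_i})$ (which are there precisely to cancel the trailing desynchronizer of the previous block and keep the $l_{d^2}$/$l_{e^2}$ pattern consistent across the $\#$) is routine but must be checked carefully; I would relegate it to the verification that ``erase all $d,e,\$,\pounds$'' turns $h(w)=g(w)$-up-to-conjugacy into a genuine word equality of the form $q^2$.
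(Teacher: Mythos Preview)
Your proposal follows the same overall strategy as the paper: use the special symbols $\$,\pounds,f,s$ together with the $d^2/e^2$ desynchronization to pin down the structure of an arbitrary solution $w$, then read off a $T_\M$-derivation block by block. The normalization via Lemma~\ref{conjugates} is exactly what the paper does (there it is packaged as an Observation that one may assume $h(w)$ starts with $\$$ and ends with $\pounds$, hence $w$ starts with $I$ and ends with $\ov{t_h}$), and your block-matching argument in the second paragraph is the paper's mechanism.

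Two differences are worth noting. First, the paper's entry point into the rigidity argument is not ``$I$ must occur'' but ``$f^3$ must occur in $h(w)$'': from the desynchronization constraints one argues that $w$ cannot stay in one phase, hence the factor $t_{h-1}t_h$ (and, by symmetry, $\ov{t_{h-1}t_h}I$) is forced; only then does one normalize. Your $\$,\pounds$-counting route to ``$I$ occurs'' should also work, but it is not the paper's.

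Second, and more importantly, your plan asserts that the phase change ``occurs \emph{exactly once} in each half'' and that $w$ has precisely the shape $I(\cdots)t_{h-1}t_h\,\ov{(\cdots)}\,\ov{t_{h-1}}\,\ov{t_h}$. This is stronger than what the paper proves and is likely false in general: nothing prevents a solution from encoding several consecutive cycles, so there may be several $t_h$'s and the $u$--$v$ border in $h(w)=uv$ need not sit at the first one. The paper explicitly allows this (``the border between $u$ and $v$ need not be in the image of the same instance of $t_h$'') and instead runs a ping-pong induction: knowing that $u$ begins with $\$\,l_{d^2}(w_0\#)d$ forces, via the $g$-image, a factor $t_h\,\ov{\alpha_1 t_1\beta_1\#}$ in $w$ with $\alpha_1 u_1\beta_1=w_0$; the $h$-image of that factor then forces, via the prefix of $g(w)$, a factor $I\,\alpha_1 t_1\beta_1\#\alpha_2 t_2\beta_2\#$ in $w$; and so on, alternating between the $I$-prefixed and the $t_h$-prefixed occurrences. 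Finiteness of $w$ then yields a cycle. Your outline would go through once you replace the ``exactly once'' claim by this weaker inductive extraction; as written, that step is a gap you would not be able to close.
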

\begin{proof}

Firstly we show that the factor $f^3$ must appear in $h(w)$ and hence $t_{h-1}t_h$ or $\ov{t_{h-1}t_h}$ has to be a factor in $w$. Assume on the contrary: there is no factor $f^3$ in $h(w)$.

From the construction of $g$ we know that also $h(w)$ must be desynchronized so that between the letters there is either a factor $d^2$ or $e^2$. Conjugation of $g(w)$ does not break this property except possibly in the beginning and the end of $h(w)$ ($h(w)$ could start and end in a single desynchronizing symbol).

Take now the first letter $c$ of $w$. We can assume that it is a non-overlined letter as the considerations are similar for the overlined case. The letter $c$ cannot be $t_{h-1}$ as it would have to be followed by $t_h$: $f^2$ does not appear as a factor under $g$ without $f^3$, and $t_{h-1}\ov{t_h}$ produces $f^4$, which is uncoverable by $g$. From the construction of $h$ we see that the letters following $c$ must also be non-overlined, otherwise the desynchronization would be broken. Thus the desynchronizing symbol is the same for all the following letters. But as we can see from the form of the morphisms $h$ and $g$, we have a different desynchronizing symbols under $g$ for $c$ and its successors. It is clear that $h(g)$ must contain both $d$ and $e$ and so $w$ must have both non-overlined and overlined letters. If there is a change in the desynchronizing symbol in $h(w)$ then it contradicts the form of the images under $g$. Hence we must have the factor $t_{h-1}t_h$ in $w$ to make the transition without breaking the desynchronization.

The images of the factor $t_{h-1}t_h$ are
$$
h(t_{h-1} t_{h}) = d s fff \$ \pounds l_{e^2}(w_0 \#)ee
$$
and
$$
g(t_{h-1} t_{h}) = r_{e^2}(u\#)s fff \$ \pounds dd.
$$

As we can see the desynchronizing symbols do not match. Hence we also must have the overlined copy of this factor in $w$, that is a factor $\ov{t_{h-1}t_h}I$, the images of which are (the letter $I$ is a forced continuation to the overlined factor to account for the special symbols $\$$ and $\pounds$):

$$
h(\ov{t_{h-1}t_h}I) = s fff \pounds \$  l_{d^2}(w_0 \#)d
$$
and
$$
g(\ov{t_{h-1}t_h}I) = r_{d^2}(u\#)s fff \$ \pounds ee.
$$
One of either of these factors has one swap between the symbols $d$ and $e$. From the above we concluded that we need an even number of these swaps as for every factor $t_{h-1}t_h$ we must also have the factor $\ov{t_{h-1}t_h}I$ and vice versa. It is possible that $h(w)$ ends in the letter $f$. In this case the swap happens "from the end to the beginning", i.e., the prefix of a factor doing the swap is at the end of $w$ and the remaining suffix is at the beginning of $w$. The following proposition shows that we can in fact restrict ourselves to the case where the factors $t_{h-1}t_h$ and $\ov{t_{h-1}t_h}$ are intact, that is, the swap does not happen from the end to the beginning of $h(w)$ as a result of the conjugation between $h(w)$ and $g(w)$. At this point we make an observation.

\begin{observ*}
It may be assumed that the first and last symbols of $h(w)$ are $ \$ $ and $\pounds$.
\end{observ*}

Indeed, if $h(w)$ is not of the desired form then it has $\pounds \$ $ as a factor (by above the symbols from $\ov{t_{h-1}t_h} I$ are in $w$). Images of the letters under $h$ do not have $\pounds \$ $ as a factor so there is a factorization $w = w_1 w_2$ such that $h(w_1)$ ends in $\pounds$ and $h(w_2)$ begins with $\$$ ($w_1$ ends in $\ov{t_h}$ and $w_2$ begins with $I$). By {Lemma \ref{conjugates}}, $h(w)$ and $g(w)$ are conjugates if and only if $h(w_2 w_1)$ and $g(w_2 w_1)$ are, where now $h(w_2 w_1)$ has $\$$ as the first symbol and $\pounds$ as the last symbol.

Now by the observation we may assume that $w$ begins with $I$ and ends with $\ov{t_h}$. From this it also follows that if $h(w)=uv$ and $g(w)=vu$ the word $u$ has $\$ $ as the first and the last symbol and $v$ has $\pounds$ as the first and the last symbol. It follows that $w= I \cdots t_h \cdots \ov{t_h}$, where the border between $u$ and $v$ is in the image $h(t_h)$:
\begin{alignat*}{2}
 h(w) \ = \ &\tikzmark{StartBrace1} \$ l_{d^2}(w_0 \#)d  \quad \cdots  \quad  f\$ \tikzmark{EndBrace1} \tikzmark{StartBrace2}\pounds l_{e^2}(w_0 \#)ee \quad  \cdots \quad &&ff \pounds \tikzmark{EndBrace2} \\
 \\
 g(w) \ = \ & \tikzmark{StartBrace3}\pounds ee  \quad \cdots  \quad s fff \pounds\tikzmark{EndBrace3}\tikzmark{StartBrace4} \$ dd \quad \quad \quad \quad \cdots \quad \quad \quad  sf&&ff \$\tikzmark{EndBrace4} \\
\end{alignat*}
\InsertUnderBrace[draw=black,text=black]{StartBrace1}{EndBrace1}{$u$}
\InsertUnderBrace[draw=black,text=black]{StartBrace2}{EndBrace2}{$v$}
\InsertUnderBrace[draw=black,text=black]{StartBrace3}{EndBrace3}{$v$}
\InsertUnderBrace[draw=black,text=black]{StartBrace4}{EndBrace4}{$u$}

Here the border between $u$ and $v$ need not be in the image of the same instance of $t_h$. Nevertheless we know by above that in the image under $g$ the word $u$ begins with $\$ l_{d^2}(w_0 \#)d$. To get this image as a factor of $g(w)$ we must have $t_h \ov{\alpha_1 t_1 \beta_1 \#}$ in $w$, where $t_1 = (u_1,v_1)$ is the first rewriting rule used and $w_0=\alpha_1 u_1 \beta_1$. Now
$$h(t_h \ov{\alpha_1 t_1 \beta_1 \#}) = f \$ \pounds l_{e^2}(w_0\# \alpha_1 v_1 \beta_1 \#)ee$$
which shows that 
\begin{equation}
\label{first}
I \alpha_1 t_1 \beta_1 \# \alpha_2 t_2 \beta_2 \# \text{ occurs in } w
\end{equation}
where by the $(B \cup \ov{B})$-determinism of $T$ the rule $t_2 \in \mathcal{R}$ is the unique rule and $\alpha_1, \alpha_2 \in L\lbrace a_1,b_1 \rbrace^* \cup \lbrace \varepsilon \rbrace$ and $\beta_1, \beta_2 \in \lbrace a_2,b_2 \rbrace^* R \ \cup \lbrace \varepsilon \rbrace$ are unique words such that $g(\alpha_2 t_2 \beta_2)=r_{e^2}(\alpha_2 u_2 \beta_2) = r_{e^2}(\alpha_1 v_1 \beta_1$).
Again,
$$h(I \alpha_1 t_1 \beta_1 \# \alpha_2 t_2 \beta_2 \#) = \$ l_{d^2}(w_0 \# \alpha_1 v_1 \beta_1 \# \alpha_2 v_2 \beta_2 \#)d$$
which is also a factor of $g(w)$ and implies that
\begin{equation}
\label{second}
t_h \ov{\alpha_1 t_1 \beta_1 \# \alpha_2 t_2 \beta_2 \# \alpha_3 t_3 \beta_3 \#} \text{ occurs in } w
\end{equation}
for a unique $t_3 \in \mathcal{R}$ and $\alpha_3 \in L\lbrace a_1,b_1 \rbrace^* \cup \lbrace \varepsilon \rbrace$,  $\beta_3 \in \lbrace a_2,b_2 \rbrace^* R \cup \lbrace \varepsilon \rbrace$.

We can see that the words given by this procedure beginning with $I$ or $t_h$ (as in \ref{first} and \ref{second}, respectively) contain derivations of the system $T_{\M}$ starting from $w_0$ where configurations are represented as words between $\#$-symbols and consecutive configurations in these words are also consecutive in $T_{\M}$ (as is explained in the beginning of the proof), that is, we get from the former to the latter by a single derivation step.

From the finiteness of $w$ it follows that long enough factors of $w$ of the forms  \ref{first} and \ref{second} represent cyclic computations: the configuration  $s$ is reached eventually and from there we have the rule $(s,w_0)$ which starts a new cycle. We conclude that $T_{\M}$ must have a cyclic computation starting from configuration $w_0$.

\end{proof}

Lemmas \ref{wordproblem}, \ref{main1} and \ref{main2} together yield our main theorem:

\begin{theorem}
The conjugate-PCP is undecidable.
\end{theorem}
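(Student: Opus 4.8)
The plan is to close the loop set up by the three preceding lemmas. Lemma~\ref{wordproblem} gives us an undecidable problem: deciding, for the fixed word $w_0$ arising in the construction of $T_{\M}$ from an arbitrary deterministic Turing machine $\M$, whether $w_0 \derive{}_{T_{\M}}^+ w_0$, i.e. whether $T_{\M}$ admits a circular derivation starting from $w_0$. I would then invoke Lemmas~\ref{main1} and~\ref{main2}, which together establish that such a circular derivation exists if and only if the conjugate-PCP instance $(h,g)$ constructed from $T_{\M}$ has a non-empty solution $w$ with $h(w) \sim_2 g(w)$ — equivalently, $h(w)=uv$ and $g(w)=vu$ for some $u,v \in B^*$. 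Since the map $\M \mapsto (h,g)$ is manifestly effective (the morphisms $h,g$ are read off from the rule set $\mathcal{R}$ by the explicit table, and $\mathcal{R}$ is computable from $\M$), this is a many-one reduction from the individual circular word problem for $T_{\M}$ to the conjugate-PCP. Undecidability of the source therefore transfers to the target.

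Concretely, I would write: suppose toward a contradiction that the conjugate-PCP were decidable. Given any deterministic Turing machine $\M$, construct the semi-Thue system $T_{\M}$ with properties~1--3 above and the fixed words $w_0,u,s$, then construct the morphism pair $(h,g)$ via the table. By Lemma~\ref{main1}, if $T_{\M}$ has a circular derivation from $w_0$ then $(h,g)$ is a positive instance of the conjugate-PCP; by Lemma~\ref{main2}, if $(h,g)$ is a positive instance then $T_{\M}$ has a circular derivation from $w_0$. Hence a decision procedure for the conjugate-PCP would decide whether $w_0 \derive{}_{T_{\M}}^+ w_0$, contradicting Lemma~\ref{wordproblem}. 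Therefore the conjugate-PCP is undecidable. One should also remark that a circular derivation $w_0 \derive{}^+ w_0$ in $T_{\M}$ by construction forces at least one phase change through the rules of~2(iv), hence actually corresponds to a halting computation of $\M$ on the empty tape; this is the content already baked into property~3, so it need not be re-proven here.

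There is essentially no new mathematical obstacle at this final step — all the work has been done in Lemmas~\ref{main1} and~\ref{main2} and in verifying the properties of $T_{\M}$. The only point deserving a word of care is the direction of the reduction and the fact that we use the \emph{individual} (fixed-$w_0$) circular word problem rather than the general one: Lemma~\ref{wordproblem} supplies exactly this, so the argument is clean. If one wanted to additionally pin down that the problem is $\Sigma_1^0$-complete (matching the classical PCP), one would note membership in $\Sigma_1^0$ is immediate — a solution $w$ together with the factorization witnesses $u,v$ is a finite certificate checkable in finite time — but the statement as given only asserts undecidability, so I would stop at the reduction.
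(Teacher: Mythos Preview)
Your proposal is correct and follows precisely the paper's own approach: the theorem is obtained simply by combining Lemma~\ref{wordproblem} (undecidability of the individual circular word problem for $w_0$) with the equivalence established in Lemmas~\ref{main1} and~\ref{main2}, via the effective construction $\M \mapsto (h,g)$. Your additional remarks on the effectiveness of the reduction and on $\Sigma_1^0$-membership are correct but go slightly beyond what the paper states; the paper's proof is literally the one-line observation that the three lemmas together yield the theorem.
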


This result does not generalize to more complex $(1,n))$-permutations using this same construction by say, adding more desynchronizing symbols and border markers for each element in the permutation. The generalization of the conjugate-PCP would be the $(1,n)$-permutational PCP, stated below:

\begin{prob*}[Image Permutation Post Correspondence Problem]
Given two morphisms $h,g: A^* \rightarrow B^*$, does there exist a word $w \in A^+$ and an $n$-permutation $\sigma$ such that $h(w)=u_1 u_2 \cdots u_n$ and $g(w)=u_{\sigma(1)} u_{\sigma(2)} \cdots u_{\sigma(n)}$ for some words $u_1,\ldots,u_n \in B^*$?
\end{prob*}

The reason that our construction does not work for the general $(1,n)$-case is that allowing more factors to be permuted can force solutions that do not describe TM computations. This is because of special cases for different values of $n$ and $\sigma$, but also by the fact that the permutated factors may be single letters. In fact any solution $w$ that produces Abelian equivalent words $h(w)$ and $g(w)$ also has a permutation that makes one of the words into the other. A "simple" proof using the techniques in this chapter is for now deemed unlikely, and some other approach may prove to be more fruitful. Note that the undecidability of the Image Permutation PCP follows already from proof of Ruohonen for $(m,n)$-permutational PCP in \cite{Ruo2}.  

As a related result we note that the PCP for the instances where one of the morphisms is a permutation of the other are undecidable. Indeed, it was shown by Halava and Harju in \cite{HaHaBullet} that the PCP is undecidable for instances $(h,h\pi)$, where $h: A^* \rightarrow B^*$ is a morphism and $\pi: A^* \rightarrow A^*$ is a permutation. 

\section{Complexity of $\Z$PCP}

In this section, we will consider the  $\Z$PCP defined in the introduction. As mentioned, undecidability of the $\Z$PCP was proved in \cite{bi-inf} using similar techniques than in the previous section for the conjugate-PCP. We begin by reformulating the problem in more details: 

\setcounter{prob}{1}

\begin{prob}[$\Z$PCP]
Let $A$ be a finite alphabet. Given a finite set  of pairs of words over $A$, say $(u_1,v_1),(u_2,v_2),  \dots,$ $(u_n,v_n)$,  does there exist a bi-infinite sequence 
$\ldots i_{-k} \ldots i_{-1},  i_0, i_1,\dots,i_k,\dots$ of the indices such that 
\[
\cdots u_{i_{-k}}\cdots u_{i_{-1}}u_{i_0}u_{i_1} \cdots u_{i_k} \cdots  = \cdots v_{i_{-k}}\cdots v_{i_{-1}}v_{i_0}v_{i_1} \cdots v_{i_k} \cdots  \, ?
\]
\end{prob}

The equality of two bi-infinite words is an equivalence of the sequences of symbols modulo a finite shift $s \in \Z$ in the positions of the sequences. 

An instance of the $\Z$PCP is given by a finite set  of pairs of words over $A$ (which can be coded by an integer via a recursive coding) and a solution to this instance is a 
bi-infinite sequence $( i_k)_{k\in\Z}  \in \{1, 2, \ldots , n\}^\Z$. 

We shall need in the sequel the notion of a Turing machine reading infinite words. We now recall these notions. 
\medskip

The {\it first infinite ordinal} is $\om$.
 An $\om$-{\it word} over an alphabet $\Si$ is an $\om$-sequence $a_1a_2a_3 \cdots$, where for all 
integers $ i\geq 1$, ~
$a_i \in\Si$.     
 The {\it set of } $\om$-{\it words} over  the alphabet $\Si$ is denoted by $\Si^\om$.
An  $\om$-{\it language} over an alphabet $\Si$ is a subset of  $\Si^\om$.  For an $\om$-word $\sigma=a_1a_2a_3\cdots$, we denote the prefix $a_1\cdots a_n$ by $\sigma[n]$.

As in the previous section, assume that a {Turing 
machine} $\mathcal{M}$ is of the form $\mathcal{M}=(Q, \Si, \Ga, \delta, q_0, F)$, where $F\subseteq Q$ is the set of {accepting states}. Turing machines reading of infinite words have considered in~\cite{CG78b,Staiger97}.
A Turing machine $\mathcal{M}$ accepts a word 
$\sigma\in \Si^\om$ with the $2$\emph{-acceptance  condition} iff there is an infinite run of $\mathcal{M}$ on input $\sigma$ 
visiting infinitely often states from $F$. The $2$-acceptance condition is also now known as the B\"uchi acceptance condition. 
On the other hand,  a Turing machine $\mathcal{M}$ accepts a word 
$\sigma\in \Si^\om$ with  $2'$-\emph{acceptance  condition} iff there is an infinite run of $\mathcal{M}$ on $\sigma$ 
visiting only finitely often the accepting states in  $F$. The $2'$-acceptance condition is also now known as the co-B\"uchi acceptance condition. 

We require in this article that an accepting  run should be infinite on the input $\sigma\in \Si^\omega$, as in \cite{Staiger97}, and not that it is complete (i.e. we do not require that all the cells of the right-infinite tape of the Turing machine are visited nor that all letters of $\sigma$ are read), or even non-oscillating, as in \cite{CG78b}. We refer the interested reader to \cite{Fin-ambTM} and papers cited in \cite{Staiger97,Fin-ambTM}  for a comparison between these modes of acceptance of infinite words by Turing machines.

We assume the reader to be familiar with the arithmetical hierarchy on subsets of  $\mathbb{N}$, as a general reference we give \cite{rog,Odifreddi1}. We now recall the definition of  the arithmetical   hierarchy   on subsets of 
$\Si^\om$ for a finite alphabet $\Si$, see \cite{Staiger97}. 
 An $\om$-language $L\subseteq \Si^\om$  belongs to the class 
$\Sigma^0_n$ iff there exists a recursive relation 
$R_L\subseteq \mathbb{N}^{n-1}\times \Si^\star$  such that
\[
L = \{\sigma \in \Si^\om \mid \exists x_1 Q_2 x_2\ldots Q_n x_n  \quad (x_1,\ldots , x_{n-1}, 
\sigma[x_n+1])\in R_L \}, 
\]
\noindent where $Q_i$  for $i=2,\dots, n$ is one of the quantifiers $\fa$ or $\exists$ 
(not necessarily in an alternating order). An $\om$-language  $L\subseteq \Si^\om$  belongs to the class 
$\Pi^0_n$ iff its complement $\Si^\om - L$  belongs to the class 
$\Sigma^0_n$.  
The inclusion relations that hold  between the classes $\Sigma^0_n$ and $\Pi^0_n$ are 
the same as for the corresponding classes of the Borel hierarchy. The classes $\Sigma^0_n$ and $\Pi^0_n$ are strictly included in the respective classes 
${\bf \Sigma}_n^0$ and ${\bf \Pi}_n^0$ of the Borel hierarchy.

An important result is that the modes of acceptance of  $\om$-languages by deterministic  Turing machines are connected  to the classes of the  arithmetical   hierarchy.  In particular,  an $\om$-language is in the   arithmetical  class $\Pi^0_2$ (respectively, $\Sigma^0_2$) if and only if it is accepted by a {\it deterministic } Turing machine with $2$-acceptance condition, i.e. B\"uchi acceptance condition (respectively, with $2'$-acceptance condition, i.e. co-B\"uchi acceptance condition), see Corollary 2.3   in \cite{Staiger97}. \medskip

We now state the main result of this section. 

\begin{theorem}
The bi-infinite PCP is in the class $\Sigma^0_2 \setminus \Pi^0_1$. 
\end{theorem}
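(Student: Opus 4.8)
\emph{Overview.} I would establish the two halves of the statement separately: that the $\Z$PCP lies outside $\Pi^0_1$, and that it lies in $\Sigma^0_2$. The lower half is essentially a bookkeeping remark about the existing undecidability proof; the upper half is the real work, and the heart of it is that a bi-infinite solution carries an \emph{unbounded} shift parameter, which is precisely what pushes the problem from $\Pi^0_1$ (the class of the one-sided infinite PCP, see \cite{Finkel}) up to $\Sigma^0_2$.

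\emph{That the $\Z$PCP is not in $\Pi^0_1$.} This follows from the undecidability proof of \cite{bi-inf}. The reduction there is effective and starts from the halting problem of Turing machines on the empty input: from a deterministic Turing machine $\M$ one computes a $\Z$PCP instance that has a solution if and only if $\M$ halts on the empty tape. Since the halting set is $\Sigma^0_1$-complete, the $\Z$PCP is $\Sigma^0_1$-hard under many-one reductions; as $\Pi^0_1$ is closed under many-one reductions and $\Sigma^0_1 \neq \Pi^0_1$, the $\Z$PCP cannot belong to $\Pi^0_1$.

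\emph{That the $\Z$PCP is in $\Sigma^0_2$.} Fix an instance, say morphisms $h,g\colon A^*\to B^*$ with $A=\{1,\dots,n\}$. A solution is a bi-infinite word $w\in A^{\Z}$ together with a shift $s\in\Z$ such that the bi-infinite images $h(w)$ and $g(w)$ agree after one of them is shifted by $s$. For a fixed candidate shift $s$, let $T_s\subseteq A^*$ be the set of \emph{central presolutions for $s$}: finite words $z=w_{-k}\cdots w_{k}$ such that, when $h(z)$ and $g(z)$ are placed on the common bi-infinite tape at the positions forced by $s$ and by the lengths of the prefix images, they agree wherever they overlap. Ordered by the relation ``enlarge the window by one more letter, alternately at the left and at the right end'', $T_s$ is a recursive, finitely branching tree: at most $|A|$ children at each node, and membership is decidable. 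By K\"onig's lemma $T_s$ has an infinite branch if and only if $T_s$ is infinite, i.e.\ if and only if it contains central presolutions of every length; and an infinite branch of $T_s$ glues together into a bi-infinite word whose every central window is consistent with $s$, which (the morphisms being non-erasing, see below) is a solution with shift $s$. Hence
\[
(h,g)\in\Z\text{PCP}\quad\Longleftrightarrow\quad \exists\, s\in\N\ \ \forall\, k\in\N\ \ \bigl[\ T_s\ \text{contains a word of length }k\ \bigr],
\]
and the bracketed statement is decidable in $(h,g,s,k)$ by a bounded search, so the right-hand side is a $\Sigma^0_2$ predicate of the instance. In the terminology recalled earlier in this section one may package the same argument as a deterministic Turing machine that reads an $\om$-word --- a binary code of $s$ followed by the left--right interleaving of the bi-infinite block sequence --- and runs forever precisely while all central windows stay consistent; the $\Z$PCP is then the set of instances whose associated $\om$-language is non-empty, and, since $s$ occurs as a finite prefix while the remaining condition is closed, König's lemma applied for each fixed $s$ again places this set in $\Sigma^0_2$ (cf.\ \cite{Staiger97}).

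\emph{The main obstacle.} The point that genuinely needs care is the clause ``$T_s$ infinite $\Leftrightarrow$ solution with shift $s$'' when $h$ or $g$ erases a letter. Then an infinite branch of $T_s$ need not produce genuinely bi-infinite images $h(w),g(w)$, so that ``equal up to the shift $s$'' ceases to be meaningful, and one must add the requirement that both images grow without bound in both directions --- a B\"uchi-type (infinitely-often) condition, which over an arbitrary recursive tree is not in general $\Sigma^0_2$. The way around this is the routine reduction of a Post correspondence problem to one with non-erasing morphisms, after which the equivalence holds as stated and the argument above applies verbatim; carrying that reduction out cleanly for the bi-infinite variant is the step I expect to demand the most effort.
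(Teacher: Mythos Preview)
Your argument is correct. The lower-bound half coincides with the paper's: both read $\Sigma^0_1$-hardness directly off the reduction in \cite{bi-inf}. For the upper bound the paper takes a different route. Rather than writing down a $\Sigma^0_2$ formula and invoking K\"onig's lemma on the tree of central presolutions, it encodes a candidate bi-infinite sequence as a single $\omega$-word (by interleaving the nonnegative and negative halves), builds a deterministic Turing machine that dovetails over all shifts $s=0,1,-1,2,-2,\ldots$ and, for each, checks the positions $m=0,1,2,\ldots$ until a mismatch is found --- entering a distinguished state at each mismatch before moving on to the next shift --- and observes that this machine accepts with the co-B\"uchi condition exactly the codes of solutions. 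It then quotes a theorem of Cenzer and Remmel \cite{CenzerRemmel03} that non-emptiness of the language of such a deterministic co-B\"uchi machine is a $\Sigma^0_2$ property of the machine. Your direct K\"onig argument is more elementary and self-contained, since the Cenzer--Remmel result itself ultimately rests on a compactness step of the same shape; the paper's packaging, on the other hand, makes explicit the correspondence between $\Sigma^0_2$ and co-B\"uchi acceptance that the section has set up, and avoids having to spell out the tree.

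One remark on what you flag as the main obstacle: the paper does not address the erasing case at all; its Turing-machine construction tacitly assumes that both images are genuinely bi-infinite, just as your tree argument does. Since the very definition of $h(w)=g(w)$ for bi-infinite $w$ presupposes that $h(w)$ and $g(w)$ are indexed by $\Z$, one in practice takes non-erasing morphisms as part of the problem statement, and the reduction you anticipate is not actually needed. The concern is legitimate but is not where the effort lies.
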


\noindent In the above statement the arithmetical classes refer to classes of sets of integers. Indeed, this means that the set of instances of the  bi-infinite PCP having a solution can be recursively coded by a set of integers in the class $\Sigma^0_2 \setminus \Pi^0_1$.

\begin{proof}
We firstly show that the $\Z$PCP is not in the class $\Pi^0_1$. This is actually a direct consequence of the proof of the undecidability of the $\Z$PCP in \cite{bi-inf}. Indeed, the proof shows that there exists  a reduction of the halting problem for Turing machines to the  $\Z$PCP. On the other hand, it is well known that the  halting problem for Turing machines is 
 $\Sigma^0_1$-complete, hence the $\Z$PCP is  $\Sigma^0_1$-hard and, in particular, it is not in the class $\Pi^0_1$. 

Secondly, we prove that the  $\Z$PCP is in the class $\Sigma^0_2$.  Let us consider an instance of the $\Z$PCP given by a finite set  of pairs of words over $A$, where $A$ is a finite alphabet, say $Ins=\{(u_1,v_1),(u_2,v_2),  \dots,$ $(u_n,v_n)\}$.  

In the first step, we are going to associate to this instance a deterministic Turing machine with  co-B\"uchi acceptance condition which accepts exactly the codes of the solutions to the instance $Ins$. For this we encode a bi-infinite sequence of integers in $\{1, 2, \ldots, n\}$ 
 \begin{equation}\label{possol1}
\ldots  i_ {-k} \ldots  i_ {-2}  i_ {-1} i_0 i_1 i_2 \ldots   i_k \ldots
\end{equation}
of a possible solution of the instance $Ins$ for the $\Z$PCP into a pair of infinite sequences
$$ 
i_0 i_1 i_2 \ldots  i_k  \ldots\quad\text{ and }\quad i_0  i_ {-1} i_ {-2} \ldots  i_ {-k}\ldots.
$$
Now both of these sequences are  infinite words over the alphabet $\{1, 2, \ldots, n\}$, so that we can code the bi-infinite sequence  in~\eqref{possol1} into an 
$\omega$-word over the finite alphabet 
 $\{1, 2, \ldots, n\}\times \{1, 2, \ldots, n\}$ so that $(i_j)_{j\in \Z}$ corresponds to   
\begin{equation}\label{possol2}
    (i_0,i_0)(i_1, i_ {-1})(i_2, i_ {-2}) \ldots ( i_k, i_ {-k})\ldots .
\end{equation}

Next we show that the set of (codes of) solutions of the instance $Ins$ of the $\Z$PCP are accepted by a deterministic Turing machine $\mathcal{M}$ with  co-B\"uchi acceptance condition, (with set of final states $F$ for the  co-B\"uchi acceptance condition). We informally explain the behaviour of this Turing machine:

The TM $\mathcal{M}$ works with integers $s$ denoting the shift in the images. Indeed, first the shift $s=0$. For an input of the form~\eqref{possol2}, denote by $u= 
\ldots u_{ i_ {-k}}\ldots  u_{ i_ {-2}} u_{  i_ {-1}} u_ { i_0} u_{ i_1 } u_{i_2}\ldots u_{ i_k } \ldots$ and 
 $v= \ldots v_{ i_ {-k}}\ldots  v_{ i_ {-2}} v_{  i_ {-1}} v_ { i_0} v_{ i_1 } v_{i_2}\ldots v_{ i_k } \ldots$
There is a (possibly infinite test) we call TEST: 
\medskip

TEST: For $m=0,1\dots $, check that $u(m)=v(m+s)$ and $u(-m)=v(-m+s)$.  

\medskip
If the TEST fails and for some $m$ one of the equations does not hold (that is, $\mathcal{M}$ found an error and the sequence is not a solution with the shift $s$), then  
$\mathcal{M}$ enters in some state in $F$, and sets $s:= -s$ if the TEST is done odd number of times and $s:=|s|+1$ it is done even number of times, and does the TEST for that new $s$. 

It is rather obvious that if $\mathcal{M}$ visits states of $F$ only finitely many times, the sequence \eqref{possol2} codes a solution of the instance $Ins$ as for some shift $s$ $\mathcal{M}$ found no error, that is,
$u=v$ modulo some shift $s \in \Z$.
Indeed, then the TM  $\mathcal{M}$ accepts the coding ~\eqref{possol2}  of the bi-infinite sequence of integers \eqref{possol1} with   co-B\"uchi acceptance condition. Conversely, if the sequence \eqref{possol2} codes a solution to the instance  $Ins$ of  $\Z$PCP, then  \eqref{possol2} is accepted by the TM $\mathcal{M}$ with co-B\"uchi acceptance condition. 

We do not go into the details of defining $\mathcal{M}$ but note that it is an easy exercise to construct such a deterministic Turing machine with  co-B\"uchi acceptance condition from the instance $Ins$. 

Now the set of infinite words accepted by such a deterministic Turing machine  with  co-B\"uchi acceptance condition is known to be an effective $\Sigma^0_2$ set. Moreover,
 Cenzer and Remmel proved in  \cite[Theorem 4.1.(iii)]{CenzerRemmel03}
that the non-emptiness problem for such  effective $\Sigma^0_2$ sets is in the class  $\Sigma_2^0$.  
Thus the problem to determine whether a given instance $Ins$ of the $\Z$PCP has a solution is in the class $\Sigma_2^0$. 
\end{proof}

The next goal of this study of the complexity of the $\Z$PCP would be to determine its exact complexity.  In particular, is it located at the second level of the arithmetical hierarchy? Is it $\Sigma_2^0$-complete?  We leave these questions  as an open problem for further study.


\end{document}